\documentclass[11pt]{article}

\usepackage{amsfonts,amsthm,amssymb,amsmath,mathrsfs}
\usepackage{graphicx,color,hyperref,mathtools}
\usepackage[usenames,dvipsnames]{xcolor}
\usepackage{pdfsync}
\usepackage[margin=1.0in]{geometry}
\usepackage[normalem]{ulem}
\usepackage[T1]{fontenc}
\usepackage[scaled=.92]{helvet}
\usepackage[round]{natbib}
\usepackage{hyperref}
\usepackage{authblk}

\usepackage{multibib}
\newcites{A}{References for Appendices}
\usepackage{tabularx, booktabs, array, multirow,enumitem}
\usepackage{mdframed}

\usepackage{pgfplots}
\pgfplotsset{compat=1.18}
\usepackage{tikz}
\usetikzlibrary{calc,shapes,arrows.meta,pgfplots.groupplots,decorations.pathreplacing} 
\usepackage{subcaption}
\usepackage{graphicx}

\newtheorem{theorem}{Theorem}

\newtheorem{lemma}{Lemma}

\usepackage{xcolor}
\definecolor{lightgreen}{RGB}{110,170,110}
\usepackage{chngcntr}
\usepackage{apptools}
\AtAppendix{\counterwithin{figure}{section}}

\usepackage[english]{babel}
\usepackage[utf8]{inputenc}
\usepackage[colorinlistoftodos]{todonotes}
\usepackage{algorithm}
\usepackage{algpseudocode}

\usepackage{chngcntr}

\newcounter{parentnumber}

\usepackage[hang,flushmargin]{footmisc}

\usepackage[normalem]{ulem}
\allowdisplaybreaks

\usepackage{array}
\newcolumntype{P}[1]{>{\centering\arraybackslash}p{#1}}
\newcolumntype{M}[1]{>{\centering\arraybackslash}m{#1}}

\usepackage{amsmath}
\usepackage{amsfonts}
\usepackage{graphicx}
\usepackage{fancyvrb}
\usepackage{url}
\usepackage{listings}
\usepackage{setspace}
\usepackage{tikz}
\usetikzlibrary{automata,positioning}
\usepackage{pgfplots}

\title{Branching-Time Signatures of Growth Regime in Tumor Birth-Death Models}

\author{Meng Wang$^{1,\dagger}$ \quad \hspace*{-6pt} Zhangchi Weng$^{1,\dagger}$ \quad \hspace*{-6pt} Jasmine Foo$^{2}$ \quad \hspace*{-6pt}Zicheng Wang$^{1}$}
\date{%
    \footnotesize $^1$School of Data Science, The Chinese University of Hong Kong, Shenzhen, Guangdong 518172, China. \\[2pt]
     \footnotesize $^2$School of Mathematics, University of Minnesota, Minneapolis, MN 55455, USA. \\[2pt]
    $^{\dagger}$These authors contributed equally to this work 
}

\begin{document}

\maketitle

\begin{abstract}
Tumor evolution is shaped by cell division, cell death, competition, and constraints imposed by the local microenvironment. Because these dynamics are usually not observed directly, phylogenetic trees inferred from somatic variation in sampled tumor cells can provide an indirect record of the population history that produced the sample. In this paper, we examine whether the distribution of inferred internal branching times exhibits signatures that depend on the underlying tumor growth regime. Specifically, we study the distribution of internal branching times in continuous-time birth--death models of tumor evolution. Exponentially growing populations exhibit a unimodal distribution of internal branching times, with the mode located near the root. In contrast, logistic growth, which models expansion constrained by carrying capacity, yields a substantially more intricate genealogical structure: the distribution of branching times undergoes a systematic transition as the time elapsed since tumor initiation increases. Specifically, this progression shifts from an expansion-dominated phase, through an intermediate early–recent bimodal phase, to a final recent-dominated phase. Extensive simulations of reconstructed tumor genealogies support these theoretical findings.

\end{abstract}
\singlespacing

\bigskip
\noindent \textbf{Keywords}: Birth--death process; Tumor evolution; Phylogenetic tree; Branching-time distribution; Logistic growth

\thispagestyle{empty}

\newpage

\section{Introduction}

Biological populations often evolve under different growth regimes shaped by ecological and physical constraints. In microbial batch cultures, populations commonly pass from rapid expansion to slower growth or stationary phase as nutrients are depleted and waste products accumulate (\citealt{monod2012growth}). Viral populations can show analogous changes in growth rate. They may expand rapidly when susceptible hosts are abundant, and then grow more slowly or decline as susceptible hosts are depleted, immunity accumulates, or interventions are introduced (\citealt{grenfell2004unifying,stadler2013birth}). Tumor cell populations provide another important example. Early lesions or expanding clones often exhibit exponential growth, whereas later progression can be constrained by tissue architecture, limited space and nutrients, vascular supply, immune interactions, and other microenvironmental changes (\citealt{gerlee2013model,benzekry2014classical,noble2022spatial}).

Growth regime matters because two populations with the same size at the observation time may have different biological and genealogical histories. A population generated mainly by rapid expansion may have a different distribution of ancestry than a population that has persisted for a long time under density-dependent regulation. In the latter case, the sampled lineages may reflect a longer history of birth, death, competition, and turnover. These alternative growth dynamics can produce distinct patterns of genetic diversity, different probabilities that rare variants are maintained or lost, and different potential responses to environmental change or therapy. Classical population-genetic results show that stable and rapidly expanding populations can produce different distributions of pairwise sequence differences (\citealt{slatkin1991pairwise}). In cancer, clonal analyses and mathematical models suggest that tumor growth regime can shape clonal composition and intratumor heterogeneity (\citealt{driessens2012defining,sottoriva2015big,waclaw2015spatial,chkhaidze2019spatially,fu2022spatial,noble2022spatial,leder2025parameter}). Thus, understanding how growth regimes shape genetic and genealogical patterns is important for interpreting the biological state and possible future behavior of an observed population.

In many settings, however, the past population trajectory is not observed directly. Instead, one observes a sample of extant cells, individuals, or lineages at one or a few time points. In tumor evolution, advances in bulk, multi-region, and single-cell sequencing have made it possible to reconstruct evolutionary relationships among sampled tumor cells or clones (\citealt{navin2011tumour,gerstung2020evolutionary,lewinsohn2023state}). The resulting phylogenetic tree can retain partial information about the evolutionary history that generated the observed sample. Its topology describes ancestry relationships among sampled lineages. Its branch lengths, and its internal node times when the tree is time-calibrated, describe how inferred lineage divergences are distributed through time. These features provide a basis for studying how present-day samples reflect past population dynamics. In this paper, we focus on the timing of internal branching events in sampled genealogies. We investigate whether their distribution contains signatures associated with different growth regimes, especially exponential expansion and density-dependent, carrying-capacity-limited growth.

Throughout this paper, tumor evolution serves as the motivating example. However, the main results apply more broadly to evolving biological populations that can be modeled by birth--death processes. The remainder of the paper is organized as follows. Section~2 reviews the relevant literature. Section~3 introduces the birth–death models, sampled genealogies, and branching-time distribution. Section~4 presents our theoretical and simulation results. Section~5 offers concluding remarks. All proofs are provided in the Appendix.

\section{Related Literature}

Phylogeny-based approaches have long been used to study population history from reconstructed trees; see \citet{mooers1997inferring} and \citet{stadler2021phylodynamics} for reviews. A reconstructed phylogeny encodes more than shared ancestry among sampled individuals. When the tree is time-scaled, its internal node times represent inferred divergence times, or equivalently coalescence times when viewed backward from sampling. Its branch lengths describe the temporal separation among sampled lineages and their common ancestors. These temporal features underlie many tree-based methods that link sampled genealogies to the population process that generated the sample. Our work belongs to this broad literature and focuses specifically on a particular tree summary: the distribution, and in particular the modal structure, of internal branching times under prescribed birth–death growth regimes.

Coalescent theory provides a probabilistic link between the temporal structure of a genealogy and the population-size history that generated it. In a time-inhomogeneous Kingman coalescent, suppose that \(a(t)\) ancestral lineages are present at time \(t\) before sampling. The instantaneous rate at which some pair of lineages coalesces is proportional to \(\binom{a(t)}{2}/N_e(t)\), where \(N_e(t)\) denotes the effective population size. Thus, after accounting for the number of lineages at risk, intervals with many coalescent events correspond to higher coalescence rates and hence smaller effective population size. Conversely, long waiting times correspond to larger effective population size.\footnote{\(N_e(t)\) is a genealogical quantity that controls the rate of common ancestry, and it need not coincide with the census population size.} This connection has made coalescent theory a central tool for studying population-size change from genetic data and time-scaled genealogies
(\citealt{kingman1982coalescent,kuhner1998maximum,rosenberg2002genealogical,lambert2013birth,lambert2018coalescent,harris2020coalescent,cheek2022coalescent}).

Early coalescent-based methods used the timing of ancestral events to infer changes in population size. \citet{nee1995inferring} developed graphical approaches for detecting departures from constant population size, including exponential growth, from molecular trees. This idea was developed into skyline methods by \citet{pybus2000integrated} and \citet{strimmer2001exploring}. These methods estimate a piecewise-constant effective population-size trajectory from the waiting times between successive coalescent events, with the generalized skyline plot grouping adjacent intervals to reduce variance. Later Bayesian extensions, including the Bayesian skyline plot \citep{drummond2005bayesian} and the extended Bayesian skyline plot \citep{heled2008bayesian}, jointly estimate the genealogy, sequence-evolution parameters, and population-size history. They can also accommodate heterochronous sampling and multiple loci. Subsequent Bayesian nonparametric methods regularized these estimates by placing smoothing priors on the population-size trajectory. Examples include the Gaussian Markov random field priors used in the Bayesian skyride and skygrid \citep{minin2008smooth,gill2013improving}, Gaussian process priors \citep{palacios2013gaussian}, and horseshoe Markov random field priors designed to capture both smooth variation and abrupt changes \citep{faulkner2020horseshoe}.

Together, these methods show that the temporal distribution of coalescent events carries substantial information about time-varying effective population size. They are closely related to our work because both use internal node times as genealogical information. However, the emphasis is different. Skyline-type methods use these node times to reconstruct an effective population-size trajectory. This trajectory summarizes the combined effects of birth, death, sampling, and other processes through a single genealogical rate. By contrast, we start from explicit birth--death models and analyze how the shape of the internal branching-time distribution changes across growth regimes. In particular, we examine whether exponential expansion and density-dependent regulation leave different branching-time signatures in sampled genealogies.

A second related line of work uses birth--death and birth--death-sampling models directly. In epidemiological phylodynamics, birth--death skyline methods allow transmission, removal, and sampling parameters to vary through time, and use reconstructed phylogenies to estimate these time-varying quantities \citep{stadler2013birth}. Coalescent point process theory also provides tractable descriptions of reconstructed trees generated by birth--death processes, and is closely connected to the distribution of node depths in sampled trees \citep{lambert2013birth,lambert2018coalescent}. These approaches are relevant to our study because they connect forward-time branching processes with backward-time genealogical structure. Our contribution differs in emphasis. Rather than estimating a time-varying parameter trajectory from data, we characterize the branching-time distribution induced by specified exponential and density-dependent birth--death models.

Recent work has adapted phylogenetic and coalescent ideas to cellular and tumor evolution. \citet{johnson2023clonerate} developed coalescent-based estimators of clonal net growth rates from single-cell phylogenies or shared mutation patterns. In solid tumors, state-dependent phylodynamic models have been used to connect branching patterns, mutation accumulation, and spatial location in order to estimate spatial variation in cell-division rates \citep{lewinsohn2023state}. These studies demonstrate the growing role of tree-based methods in cancer and cell biology. Our work adds to this literature by asking a complementary question: how does the distribution of internal branching times change across growth regimes, and what qualitative signatures arise under exponential expansion versus density-dependent regulation?

\section{Model}\label{Sec: Model}

In this section, we specify the continuous-time birth--death models used to represent the growth regimes considered in this study: unconstrained exponential expansion and density-dependent logistic growth. We then describe the sampled genealogy induced by a finite sample of extant tumor cells under each model. Lastly, we define the internal branching times extracted from these sampled genealogies and the corresponding branching-time distribution, which is the main genealogical quantity analyzed in this paper.

\subsection{Birth--Death Model}

We model tumor-cell population dynamics as a continuous-time birth--death process. Let \(N(t)\) denote the number of living cells at time \(t\). Conditional on the current population size, each living cell experiences stochastic division and death events with rates determined by the specified growth regime. A division event replaces one cell by two daughter cells and therefore increases \(N(t)\) by one. A death event removes one living cell and decreases \(N(t)\) by one.

We write \(\lambda(n)\) and \(\mu(n)\) for the per-cell birth and death rates when the population size is \(n\). Thus, conditional on \(N(t)=n\), the process jumps from \(n\) to \(n+1\) at rate \(n\lambda(n)\), and from \(n\) to \(n-1\) at rate \(n\mu(n)\). In the following subsections, we specify these rates for the exponential-growth model and for two density-dependent logistic models.

\subsubsection{Exponential-Growth Model}

In the exponential-growth regime, the per-cell birth and death rates are constant and independent of population size:
\begin{equation}
    \lambda(n) \equiv \lambda_0, \qquad \mu(n) \equiv \mu_0 .
\end{equation}
We assume that \(\lambda_0 > \mu_0\), so that the birth--death process is supercritical. We define the net per-cell growth rate by
\begin{equation}
    r_0 = \lambda_0 - \mu_0 > 0 .
\end{equation}
For a deterministic initial population size \(N(0)=N_0\), the expected population size is $\mathbb{E}[N(t)] = N_0 e^{r_0t}$. Unless otherwise stated, we take \(N_0=1\), in which case \(\mathbb{E}[N(t)] = e^{r_0t}\). This constant-rate model represents unconstrained tumor-cell expansion. It provides a baseline approximation for early clonal growth, before density-dependent constraints such as limited space, resource limitation, crowding, and competition have appreciable effects.

\subsubsection{Density-Dependent Logistic Growth Models}

Following \citet{cheek2022coalescent}, we consider two density-dependent logistic birth--death models. Let \(\kappa\) denote the carrying capacity. Here, \(\kappa\) is the population scale at which the per-cell net growth rate is zero in the deterministic logistic approximation. It should not be interpreted as a hard upper bound on the stochastic population size.

The two logistic models have the same density-dependent net growth rate: $\lambda(n)-\mu(n)=r_0\left(1-\frac{n}{\kappa}\right)$. They differ, however, in how density dependence is assigned to the birth and death rates. In both models, \(\lambda_0\) and \(\mu_0\) denote the baseline low-density per-cell birth and death rates.

\textbf{Model 1: Density-dependent death.}
In the first model, density regulation acts through the death rate. The per-cell birth rate remains constant, whereas the death rate increases with population size:
\[
    \lambda(n)=\lambda_0,
    \qquad
    \mu(n)=\mu_0+\frac{n}{\kappa}(\lambda_0-\mu_0).
\]
Biologically, this formulation represents settings in which crowding, resource limitation, hypoxia, or other local environmental stresses increase cell death during later stages of tumor growth.

\textbf{Model 2: Density-dependent proliferation suppression.}
In the second model, density regulation acts through the birth rate. The per-cell death rate remains constant, whereas the birth rate decreases with population size:
\[
    \lambda(n)=\lambda_0-\frac{n}{\kappa}(\lambda_0-\mu_0),
    \qquad
    \mu(n)=\mu_0.
\]
Biologically, this formulation represents settings in which limited space, nutrient availability, growth signals, or other microenvironmental constraints suppress proliferation without directly increasing the baseline death rate.

Although the two logistic models have the same density-dependent net growth rate, they assign density regulation to different components of the birth--death process. As a result, they have different per-cell birth and death rates along the growth trajectory. These differences can affect the sampled genealogy even when the deterministic population-size trajectory is the same. We analyze both models to study how internal branching-time distributions reflect the transition from early expansion to density-dependent regulation, and to examine whether density-dependent death and density-dependent proliferation suppression produce different branching-time profiles.

\subsection{Sampled Phylogenetic Trees and Branching-Time Distribution}\label{Sec: Tree and Branching Time}

At a fixed observation time \(T\), let \(\mathcal{L}(T)\) denote the set of living tumor cells, so that $|\mathcal{L}(T)| = N(T)$. The complete population genealogy, including extinct lineages and unsampled side branches, is generally not observed. We therefore consider a finite sample of extant cells,
\[
    S = \{s_1,\ldots,s_m\} \subset \mathcal{L}(T),
\]
where \(m\leq N(T)\) is the sample size. Unless otherwise stated, the sampled cells are drawn uniformly without replacement from the extant population \(\mathcal{L}(T)\). 

The sampled phylogenetic tree associated with \(S\) is the rooted genealogy obtained by tracing the sampled cells backward in time to their most recent common ancestor and retaining only the ancestral lineages needed to connect them. The sampled cells form the leaves of the tree, and the root represents the most recent common ancestor of the sample. Internal nodes correspond to retained ancestral division events. These are division events for which both daughter lineages leave at least one descendant in \(S\). Thus, the sampled tree is a pruned representation of the full population genealogy. It provides a partial retrospective summary of the evolutionary history represented in the sample. Its topology records shared ancestry among sampled lineages, and its branch lengths record the elapsed time between retained ancestral events and sampled tips. The timing of internal nodes therefore describes how retained lineage-splitting events are distributed through the sampled genealogy.

For each sampled tree, we record the times of its internal branching nodes. Assume \(m\geq 2\), so that the sampled tree contains \(m-1\) internal branching nodes. Let
\[
    \mathcal{I}_S=\{1,\ldots,m-1\}
\]
denote the set of internal nodes in the sampled tree. For each \(i\in \mathcal{I}_S\), let \(\tau_{S,i}\) denote the forward-time location of internal node \(i\), measured on the same time scale as the birth--death process. Thus, $0 \leq \tau_{S,i} \leq T$. Under this convention, smaller values of \(\tau_{S,i}\) correspond to older branching events, closer to tumor initiation and to the root of the sampled tree. Larger values correspond to more recent branching events, closer to the sampled tips at time \(T\).

To compare branching times across observation times and across realizations, we scale internal branching times by the observation time \(T\). Specifically, we define
\[
    \widetilde{\tau}_{S,i}
    =
    \frac{\tau_{S,i}}{T},
    \qquad i\in \mathcal{I}_S .
\]
Then \(\widetilde{\tau}_{S,i}\in[0,1]\), where \(0\) corresponds to tumor initiation and \(1\) corresponds to the sampling time. Unlike normalization by the depth of each sampled tree, this scaling preserves the position of each branching event relative to the full observation interval \([0,T]\). In particular, the oldest branching event in the sampled tree need not have scaled time \(0\), because the most recent common ancestor of the sample may occur after tumor initiation.

The main genealogical quantity analyzed in this paper is the empirical distribution of the \(T\)-scaled internal branching times:
\[
    \widehat{F}_{S,T}(x)
    =
    \frac{1}{m-1}
    \sum_{i\in\mathcal{I}_S}
    \mathbf{1}\{\widetilde{\tau}_{S,i}\le x\},
    \qquad 0\le x\le 1.
\]
The shape of this distribution summarizes how retained lineage-splitting events are positioned within the full time interval from tumor initiation to sampling. In the results below, we use this distribution to compare branching-time patterns across exponential and density-dependent growth regimes.

\section{Results}

In this section, we present theoretical and numerical results for the distribution of internal branching times under the growth regimes introduced above. We first examine the constant-rate exponential model, which serves as a reference case for unconstrained tumor-cell expansion. Under a large-\(T\), large-\(m\) coalescent point process approximation, we show that the marginal branching-time distribution is unimodal. We then study density-dependent logistic growth. For this case, we derive a deterministic sampled-genealogy approximation that gives an explicit branching-time intensity. This approximation describes how the branching-time distribution changes with the observation time and characterizes a transition in its modal structure: from an expansion-dominated phase, through an early--recent bimodal phase, to a recent-dominated phase.

\subsection{Branching-Time Distributions Under Exponential Growth}

We first analyze the constant-rate exponential-growth model. Recall that \(m\) denotes the number of sampled extant cells. For a fully resolved rooted binary genealogy with \(m\) leaves, the sampled tree contains \(m-1\) internal branching events. We index these events by $\mathcal{I}_S=\{1,\ldots,m-1\}$. Note that the branching events are ordered at random, not in chronological order by when the branching events occurred. As in Section~\ref{Sec: Tree and Branching Time}, let \(\tau_{S,i}\) denote the forward-time location of internal node \(i\in\mathcal{I}_S\), measured on the same time scale as the birth--death process.

Following the large-\(T\), large-\(m\) coalescent point process (CPP)
approximation of \citet{johnson2023clonerate} (see Section~2.2), we
distinguish forward-time branching locations from backward CPP node
depths. Let \(H_{S,i}\) denote the CPP node depth of internal node \(i\),
measured backward from the observation time \(T\). Then
\[
    H_{S,i}=T-\tau_{S,i}.
\]
For a supercritical constant-rate birth--death process with net growth
rate \(r_0=\lambda_0-\mu_0>0\), the CPP approximation gives
\begin{equation}
    H_{S,i}
    =
    T
    -
    \frac{1}{r_0}
    \left[
        \log(1/W_S)+\log m+U_{S,i}
    \right],
    \qquad i\in\mathcal{I}_S,
    \label{eq:exp-cpp-depth}
\end{equation}
where $W_S\sim\operatorname{Exp}(1)$, $U_{S,i}\overset{\mathrm{i.i.d.}}{\sim}\operatorname{Logistic}(0,1)$, and \(W_S\) is independent of the \(U_{S,i}\). Equivalently, the forward-time branching locations satisfy
\begin{equation}
    \tau_{S,i}
    =
    \frac{1}{r_0}
    \left[
        \log(1/W_S)+\log m+U_{S,i}
    \right],
    \qquad i\in\mathcal{I}_S .
    \label{eq:exp-cpp-forward}
\end{equation}
We then have the following unimodality result.

\begin{theorem}\label{Thm: unimodality_exponential}
Under the exponential-growth CPP approximation, the marginal density of each \(\tau_{S,i}\) is log-concave. Consequently, the marginal branching-time distribution is unimodal.
\end{theorem}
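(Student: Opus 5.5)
The plan is to write $\tau_{S,i}$ as an affine image of a sum of two independent random variables with log-concave densities, and then invoke closure of log-concavity under convolution (Prékopa's theorem). From \eqref{eq:exp-cpp-forward},
\[
\tau_{S,i}=\frac{1}{r_0}\bigl(\log m + V_S + U_{S,i}\bigr),\qquad V_S:=\log(1/W_S)=-\log W_S .
\]
Since $\log m$ is a deterministic shift and $1/r_0>0$ is a positive scaling, and both operations preserve log-concavity of a density, it suffices to show that $V_S+U_{S,i}$ has a log-concave density.

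First I compute the density of $V_S$. Since $W_S\sim\operatorname{Exp}(1)$, we have $P(V_S\le v)=P(W_S\ge e^{-v})=\exp(-e^{-v})$. This is the standard Gumbel law, with density $f_V(v)=\exp(-v-e^{-v})$. Its log-density is $-v-e^{-v}$, whose second derivative is $-e^{-v}<0$, so $f_V$ is (strictly) log-concave.

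Next, the $\operatorname{Logistic}(0,1)$ density is $f_U(u)=e^{-u}/(1+e^{-u})^2$. Its log-density is $-u-2\log(1+e^{-u})$, with second derivative $-2e^{-u}/(1+e^{-u})^2<0$. So $f_U$ is log-concave as well.

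Independence of $W_S$ and $U_{S,i}$ gives $f_{V+U}=f_V*f_U$. Prékopa's theorem states that the convolution of two log-concave densities is log-concave. The argument: the integrand $(x,y)\mapsto f_V(y)f_U(x-y)$ is jointly log-concave on $\mathbb{R}^2$, and marginalizing a jointly log-concave function preserves log-concavity. Hence $\tau_{S,i}$ has a log-concave density on $\mathbb{R}$.

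Finally, a log-concave density is unimodal. Its logarithm is concave, so the set where the density is at least $c$ is an interval for every $c$, and the density is nondecreasing and then nonincreasing.

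Two bookkeeping remarks are needed:
\begin{itemize}
\item The CPP approximation places $\tau_{S,i}$ on all of $\mathbb{R}$ rather than on $[0,T]$. If one instead restricts or conditions to $[0,T]$, or rescales to $\widetilde\tau_{S,i}=\tau_{S,i}/T$, log-concavity still holds, because restricting a log-concave density to an interval and renormalizing preserves it.
\item The $\tau_{S,i}$ are exchangeable over $i$, so the averaged marginal law $\mathbb{E}\widehat F_{S,T}$ coincides with each marginal and is therefore also unimodal.
\end{itemize}

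The only nonelementary step is the convolution closure. I would cite Prékopa (or Ibragimov's strong unimodality) rather than reprove it. If a self-contained argument is wanted, the fallback is to verify it via the Prékopa–Leindler inequality.
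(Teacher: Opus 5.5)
Your proof is correct and follows essentially the paper's route. You write $\tau_{S,i}$ as a sum of an independent Gumbel-type term and a logistic-type term, check log-concavity of each by a second-derivative computation, use closure of log-concavity under convolution, and conclude unimodality. The only difference is cosmetic: you factor out the affine map $x\mapsto(\log m+x)/r_0$ and work with the standard Gumbel and logistic laws, whereas the paper computes the densities of $Y_S=\frac{1}{r_0}[\log(1/W_S)+\log m]$ and $U_{S,i}/r_0$ directly. Your extra remarks on restricting to $[0,T]$ and on exchangeability are correct additions that the paper does not make.
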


Theorem~\ref{Thm: unimodality_exponential} provides a reference pattern for unconstrained exponential growth. Under the CPP approximation, retained internal branching events are distributed around a single characteristic time window. This time scale is determined by
\[
    \frac{1}{r_0}\{\log(1/W_S)+\log m\}.
\]
Thus, the location of the branching-time distribution depends on the net growth rate, the sample size, and the realization-specific factor \(W_S\). The random terms \(U_{S,i}\) describe the spread of internal node times around this characteristic scale. After scaling by the observation time \(T\), the corresponding modal location is of order $\frac{1}{r_0T}\{\log(1/W_S)+\log m\}$. Hence, when \(T\) is large relative to the characteristic sampling scale \((\log m)/r_0\), the mode lies close to the rootward side of the sampled genealogy. 

This rootward single-window structure has a natural genealogical interpretation. During exponential expansion, the population size grows rapidly after establishment. Retained internal nodes in the sampled tree correspond only to division events for which both daughter lineages leave descendants in the sample. Early ancestral divisions have more time to generate descendant lineages that are represented in the final sample. These divisions therefore tend to be retained and form a mode near the root of the sampled genealogy. Once the population is much larger than the sample size, a newly created daughter lineage typically represents a small fraction of the final population. Such recent side branches are less likely to contain sampled descendants on both daughter branches, and are often removed when the full genealogy is pruned to the sampled tree. As a result, retained branching events concentrate in one rootward time window rather than forming multiple modes.

\begin{figure}[htbp]
    \centering

    \includegraphics[width=0.9\textwidth]
    {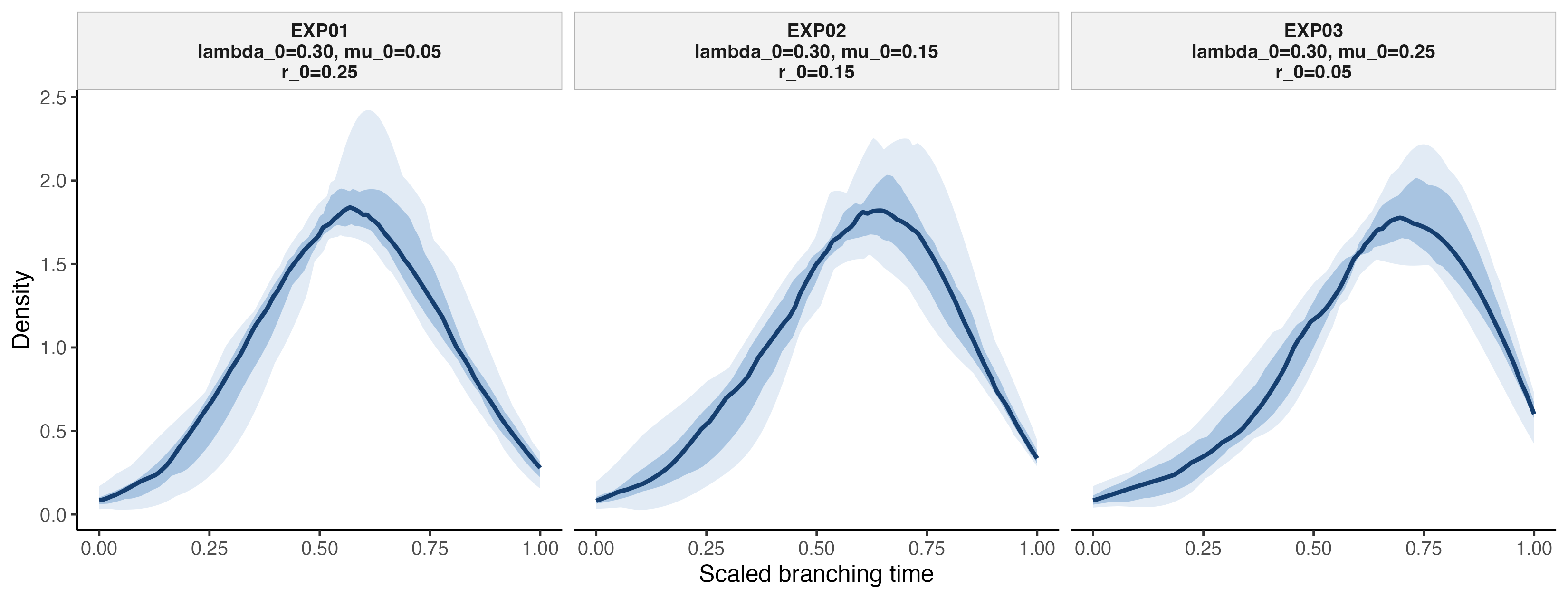}

    \caption{
    Branching-time distributions under constant-rate exponential growth. Densities are computed from internal branching times scaled by the observation time \(T\). The solid curve shows the pointwise median kernel density estimate across $20$ independent simulations. Dark and light bands represent the 25th--75th and 5th--95th percentile ranges, respectively.
    }
    \label{fig:exponential-results}
\end{figure}

We next examine whether the single-mode pattern predicted by the CPP approximation is observed in finite simulated genealogies after reconstruction and \(T\)-scaling of branching times. Simulations were performed with a common birth rate \(\lambda_0=0.30\) and death rates $\mu_0=0.05, 0.15, 0.25$. These parameter choices correspond to net growth rates $r_0=0.25, 0.15, 0.05$, respectively. For each parameter setting, we generated $20$ independent trees.

Across all three parameter settings, the median density of \(T\)-scaled branching times was unimodal (Figure~\ref{fig:exponential-results}). The solid curve shows the pointwise median kernel density estimate across the 20 independent simulations. The darker band shows the pointwise 25th--75th percentile range, and the lighter band shows the pointwise 5th--95th percentile range. These bands summarize simulation-to-simulation variation in the estimated density curves. 

As the death rate increased and the net growth rate decreased, the mode shifted toward larger \(T\)-scaled branching times. The median mode locations were \(0.580\), \(0.646\), and \(0.726\) for net growth rates \(0.25\), \(0.15\), and \(0.05\), respectively. This shift is explained by the time scale in the CPP approximation. For fixed \(T\) and \(m\), decreasing \(r_0\) increases the characteristic forward time $\frac{1}{r_0}\{\log(1/W_S)+\log m\}$. Thus, retained branching events occur later on the \(T\)-scaled time axis when net growth is slower. In these simulations, changing the birth--death parameters affected the relative timing of retained internal nodes, but it did not produce a systematic departure from the single-mode structure predicted under exponential growth.

\subsection{Branching-Time Distributions Under Logistic Growth}

We next derive an analytical approximation for the branching-time distribution under density-dependent growth. The exact finite-population distribution is difficult to obtain because the sampled genealogy depends on the full stochastic population trajectory and on which surviving lineages are represented in the sample. To obtain a tractable approximation, we replace the stochastic population size \(N(t)\) by its deterministic mean-field trajectory. We then analyze the genealogy of sample-represented lineages in this deterministic environment.

Recall that \(r_0=\lambda_0-\mu_0>0\) denotes the baseline net growth rate. For both density-dependent models introduced above, the deterministic population size \(\bar N(t)\) satisfies the logistic equation
\[
    \frac{d\bar N(t)}{dt}
    =
    r_0
    \left(1-\frac{\bar N(t)}{\kappa}\right)
    \bar N(t),
    \qquad \bar N(0)=N_0 .
\]
The solution is $\bar N(t)=\frac{\kappa}{1+c e^{-r_0 t}}$, where $c=\frac{\kappa-N_0}{N_0}$.

We define the time-dependent effective rates induced by this deterministic trajectory as $\bar\lambda(t)=\lambda(\bar N(t))$, $\bar\mu(t)=\mu(\bar N(t))$, $\bar r(t)=\bar\lambda(t)-\bar\mu(t)$. For both logistic models, $\bar r(t)=r_0\left(1-\frac{\bar N(t)}{\kappa}\right)$. Thus, the per-cell net growth rate decreases from approximately \(r_0\) when \(\bar N(t)\ll \kappa\) to \(0\) as \(\bar N(t)\to\kappa\). The two models differ, however, in their absolute birth and death rates along this trajectory. Under density-dependent death, $\bar\lambda(t)=\lambda_0$ and $\bar\mu(t)=\mu_0+\frac{\bar N(t)}{\kappa}r_0$. Under density-dependent proliferation suppression, $\bar\lambda(t)=\lambda_0-\frac{\bar N(t)}{\kappa}r_0$ and $\bar\mu(t)=\mu_0$. Therefore, although the two models have the same deterministic logistic net growth rate, they imply different levels of birth--death turnover. At carrying capacity, $\bar\lambda(\infty)=\bar\mu(\infty)=\lambda_0$ under density-dependent death, whereas $\bar\lambda(\infty)=\bar\mu(\infty)=\mu_0$ under density-dependent proliferation suppression.

Recall that \(T\) denotes the observation time. Suppose that \(m\) extant cells are sampled uniformly from the population at time \(T\). Under the deterministic approximation, the population size at sampling is \(\bar N(T)\). For tractability, we approximate sampling \(m\) cells without replacement by Bernoulli sampling, in which each extant cell is independently sampled with probability $\rho_m(T)=\frac{m}{\bar N(T)}$. Thus, the expected sample size is \(m\). 

For a single lineage alive at time \(t\le T\), define
\[
    q(t;T)
    =
    \Pr\{\text{the lineage has at least one sampled descendant at time }T\}.
\]
Under the deterministic approximation, descendants of this lineage evolve as a time-inhomogeneous birth--death process with per-cell rates \(\bar\lambda(t)\) and \(\bar\mu(t)\). The sample-representation probability satisfies the backward equation
\begin{equation}
    \frac{\partial q}{\partial t}
    =
    -\bar r(t)q(t;T)
    +
    \bar\lambda(t)q(t;T)^2,
    \qquad
    q(T;T)
    =
    \rho_m(T).
    \label{eq:q-backward-density-regulated}
\end{equation}

A retained internal branching event at time \(t\) occurs when a division event produces two daughter lineages that both have sampled descendants at time \(T\). Since the total division rate at time \(t\) is \(\bar\lambda(t)\bar N(t)\), the approximate branching-time intensity is $I_T(t)=\bar\lambda(t)\bar N(t)q(t;T)^2$. The corresponding approximate branching-time density is
\begin{equation}
    f_T(t)
    =
    \frac{I_T(t)}
    {\int_0^T I_T(u)\,du},
    \qquad
    0\le t\le T .
    \label{eq:branching-time-density}
\end{equation}
Because the denominator in \eqref{eq:branching-time-density} is independent of \(t\), the density \(f_T\) and the intensity \(I_T\) have the same modal structure. The following lemma gives the explicit formula for $I_T(t)$.

\begin{lemma}\label{lem:explicit-intensity}
Under the deterministic sampled-genealogy approximation, the branching-time
intensity for either density-dependent logistic model is
\begin{equation}
    I_T(t)
    =
    \frac{
        \displaystyle
        \frac{\lambda(\kappa)}{\kappa}
        +
        \frac{\lambda_0 c}{\kappa}e^{-r_0 t}
    }{
    \left[
        \displaystyle
        \frac{1}{m}
        +
        \frac{\lambda(\kappa)}{\kappa}(T-t)
        +
        \frac{\lambda_0 c}{\kappa r_0}
        \left(e^{-r_0 t}-e^{-r_0 T}\right)
    \right]^2
    },
    \qquad 0\le t\le T,
    \label{eq:explicit-intensity-logistic}
\end{equation}
where
\[
    \lambda(\kappa)
    =
    \begin{cases}
        \lambda_0, & \text{for density-dependent death},\\
        \mu_0, & \text{for density-dependent reduced reproduction}.
    \end{cases}
\]
\end{lemma}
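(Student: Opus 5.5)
The plan is to linearize the Riccati equation \eqref{eq:q-backward-density-regulated} with the reciprocal substitution, solve it in closed form using the logistic trajectory, and substitute the result into $I_T(t)=\bar\lambda(t)\bar N(t)q(t;T)^2$.

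\textbf{Step 1: linearize.} Since $q(T;T)=\rho_m(T)>0$ and the right-hand side of \eqref{eq:q-backward-density-regulated} vanishes at $q=0$, uniqueness of solutions keeps $q(t;T)>0$ on $[0,T]$. So $p(t)=1/q(t;T)$ is well defined. Dividing \eqref{eq:q-backward-density-regulated} by $-q^2$ gives the linear equation
\[
p'(t)=\bar r(t)\,p(t)-\bar\lambda(t),\qquad p(T)=\bar N(T)/m .
\]

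\textbf{Step 2: solve with the population size as integrating factor.} Rather than computing $\exp(-\int_t^T\bar r)$ explicitly, I use that $\bar N'=\bar r\,\bar N$, so $(1/\bar N)'=-\bar r/\bar N$. Then
\[
\frac{d}{dt}\Bigl(\frac{p(t)}{\bar N(t)}\Bigr)=\frac{\bar r p-\bar\lambda}{\bar N}-\frac{\bar r p}{\bar N}=-\frac{\bar\lambda(t)}{\bar N(t)}.
\]
Integrating from $t$ to $T$ and using $p(T)/\bar N(T)=1/m$ yields
\[
\frac{1}{\bar N(t)\,q(t;T)}=\frac1m+\int_t^T\frac{\bar\lambda(u)}{\bar N(u)}\,du .
\]
Substituting into $I_T=\bar\lambda\bar N q^2$ gives
\[
I_T(t)=\frac{\bar\lambda(t)/\bar N(t)}{\bigl[\tfrac1m+\int_t^T\bar\lambda(u)/\bar N(u)\,du\bigr]^2}.
\]
This already has the shape of \eqref{eq:explicit-intensity-logistic}.

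\textbf{Step 3: compute the ratio $\bar\lambda/\bar N$.} From the logistic solution, $1/\bar N(t)=(1+ce^{-r_0t})/\kappa$.
\begin{itemize}
\item Under density-dependent death, $\bar\lambda=\lambda_0$, so $\bar\lambda/\bar N=\lambda_0/\kappa+\lambda_0ce^{-r_0t}/\kappa$.
\item Under proliferation suppression, $\bar\lambda/\bar N=\lambda_0/\bar N-r_0/\kappa=(\lambda_0-r_0)/\kappa+\lambda_0ce^{-r_0t}/\kappa$, and $\lambda_0-r_0=\mu_0$.
\end{itemize}
In both cases $\bar\lambda(t)/\bar N(t)=\lambda(\kappa)/\kappa+\lambda_0ce^{-r_0t}/\kappa$, with $\lambda(\kappa)$ as in the statement. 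This is the numerator of \eqref{eq:explicit-intensity-logistic}. Integrating it over $[t,T]$ gives $\frac{\lambda(\kappa)}{\kappa}(T-t)+\frac{\lambda_0c}{\kappa r_0}(e^{-r_0t}-e^{-r_0T})$, which is the bracket in the denominator.

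There is no real obstacle here. The only point requiring some insight is choosing $\bar N$ as the integrating factor in Step 2; it avoids evaluating $\exp(-\int\bar r)$ directly and makes the $1/m$ term come out cleanly.
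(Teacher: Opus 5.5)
Your proposal is correct and follows the paper's proof essentially step for step. Both use the reciprocal substitution $v=1/q$, the identity $\frac{d}{dt}(v/\bar N)=-\bar\lambda/\bar N$ obtained from $\bar N'=\bar r\bar N$, integration against the terminal value $v(T)/\bar N(T)=1/m$, and evaluation of $\bar\lambda/\bar N$ along the logistic trajectory. The differences are cosmetic: the paper treats both models at once by writing $\bar\lambda(t)=\lambda_0-\frac{\bar N(t)}{\kappa}(\lambda_0-\lambda(\kappa))$, whereas you check the two cases separately, and you add a justification that $q$ stays positive so the substitution is legitimate.
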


We now use Lemma~\ref{lem:explicit-intensity} to characterize how the modal structure changes with the observation time \(T\). Since \(f_T\) differs from \(I_T\) only by a positive normalizing constant, we study the modes of \(I_T\). Recall that $c=\frac{\kappa-N_0}{N_0}$, and define the dimensionless parameters $A=\frac{\lambda_0 c}{\lambda(\kappa)}$ and $B=\frac{r_0\kappa}{m\lambda(\kappa)}$.

\begin{theorem}\label{thm:sequential-mode-transition}
Assume that $A>20$ and $10<B<\frac{A}{2}$\footnote{Under the single-cell initialization \(N_0=1\), we have $c=\kappa-1$, $A=\frac{\lambda_0(\kappa-1)}{\lambda(\kappa)}$, $B=\frac{r_0\kappa}{m\lambda(\kappa)}$. Thus, for large \(\kappa\) and sparse sampling \(1\ll m\ll\kappa\), the conditions \(A>20\) and $10<B<\frac{A}{2}$ are mild.}. Then there exist transition times $0<T_1<T_2<T_3$ such that：

\begin{enumerate}
    \item If \(0<T<T_1\), then \(I_T\) is strictly increasing on \([0,T]\). Hence the branching-time density has a single boundary mode at the recent endpoint \(t=T\). 

    \item If \(T_1<T<T_2\), then \(I_T\) has exactly one interior local maximum. 

    \item If \(T_2<T<T_3\), then \(I_T\) has one interior local maximum, one interior local minimum, and a boundary maximum at the recent endpoint \(t=T\). Thus the branching-time density is bimodal.

    \item If \(T>T_3\), then \(I_T\) has no interior local maximum and has one interior local minimum. The recent endpoint \(t=T\) is the only asymptotically non-negligible mode. More precisely, for each fixed \(a\ge 0\), $I_T(T-a)\rightarrow\frac{\lambda(\kappa)m^2/\kappa}{\left(1+\lambda(\kappa)ma/\kappa\right)^2}$ as $T\to\infty$, while $\frac{I_T(0)}{I_T(T)}\rightarrow 0$. Moreover, for every \(0<\eta<1\), $\int_0^{(1-\eta)T} f_T(t)\,dt \rightarrow 0$ as $T\to\infty$. Thus, after rescaling by the total observation time, the branching-time distribution concentrates near the recent endpoint.
\end{enumerate}
\end{theorem}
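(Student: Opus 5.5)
The plan is to reduce the sign of $I_T'$ to the sign of an explicit convex function of $t$. The starting observation is that, writing $I_T(t)=a(t)/D(t)^2$ with $a(t)=\frac{\lambda(\kappa)}{\kappa}(1+Ae^{-r_0t})$ and $D(t)$ the bracket in Lemma~\ref{lem:explicit-intensity}, one has $D'(t)=-a(t)$. Hence
\[
I_T=\frac{d}{dt}\frac{1}{D},\qquad I_T'=\frac{a'D+2a^2}{D^3}.
\]
Since $D>0$, the sign of $I_T'$ is the sign of $2a^2+a'D$.

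Next I pass to dimensionless time $s=r_0t$, $S=r_0T$, $x=e^{-s}$, and use $\kappa/(\lambda(\kappa)m)=B/r_0$. A direct expansion should give
\[
2a^2+a'D\;\propto\; 2+Ax\bigl(4-B-(S-s)+Ae^{-S}\bigr)+A^2x^2 .
\]
Dividing by $Ax>0$, the sign of $I_T'$ at $s$ equals the sign of
\[
h_S(s)=\varphi(s)-\psi(S),\qquad \varphi(s)=\tfrac{2}{A}e^{s}+Ae^{-s}+s,\qquad \psi(S)=S+B-4-Ae^{-S}.
\]
Here $\varphi$ is strictly convex with minimizer $s^*=\log(A/2)$ and minimum value $3+\log(A/2)$, and $\psi$ is strictly increasing. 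Consequently $h_S$ has at most two zeros on $[0,S]$, and the modal structure is determined by three quantities: the signs of $h_S(0)$ and $h_S(S)$, and the sign of the interior minimum of $h_S$.

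The endpoint values are explicit. First, $h_S(0)=A+2/A-\psi(S)$ is strictly decreasing in $S$; it is positive for small $S$ and vanishes at a unique $S_\ast$. Second, $h_S(S)=k(S)+4-B$ with $k(S)=\frac{2}{A}e^{S}+2Ae^{-S}$. The function $k$ is convex with minimum value $4$ at $S=\log A$, and $k(0)=2A+2/A$. Since $B>10$ and $2A+2/A+4>B$ (using $B<A/2$), $h_S(S)$ is positive for small $S$. It vanishes at exactly one $S_1<\log A$ and exactly one $S_3'>\log A$, and it is negative in between.

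I then set $T_1=S_1/r_0$, $T_2=S_3'/r_0$ and $T_3=S_\ast/r_0$, and read off the four regimes.
\begin{itemize}
\item For $S<S_1$: $s^*=\log(A/2)>S_1>S$, so $h_S$ is decreasing on $[0,S]$ with $h_S(S)>0$. Hence $I_T$ is strictly increasing, which is part 1.
\item For $S_1<S<S_3'$: $h_S(0)>0>h_S(S)$, and convexity gives exactly one sign change $+\to-$. This is a single interior maximum, part 2.
\item For $S_3'<S<S_\ast$: both endpoint values are positive. The interior minimum is negative because $s^*<\log A<S_3'<S$ and $h_S(s^*)<h_{S}(S_3')\le h_{S_3'}(S_3')=0$, using that $\psi$ is increasing. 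This gives the pattern $+,-,+$: an interior maximum, an interior minimum, and a boundary maximum at $T$, which is part 3.
\item For $S>S_\ast$: $h_S(0)<0<h_S(S)$, so there is a single $-\to+$ change, giving one interior minimum and no interior maximum, part 4.
\end{itemize}

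The main obstacle is establishing the ordering $S_1<S_3'<S_\ast$, equivalently $h_{S_3'}(0)>0$, i.e.\ $\psi(S_3')<A+2/A$; this is where $A>20$ and $10<B<A/2$ enter. From $h_{S_3'}(S_3')=0$ one gets $\frac{2}{A}e^{S_3'}<B-4$, so $S_3'<\log(A(B-4)/2)<\log(A^2/4)$. Then $\psi(S_3')<2\log(A/2)+B-4<2\log(A/2)+A/2-4$. It therefore suffices that $A/2>2\log(A/2)-4$, which holds comfortably for $A>20$ and can be checked by monotonicity in $A$. I expect the constants to need some care here, and I would verify this inequality first.

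Finally, the limits in part 4 follow directly from the explicit formula. For fixed $a$, as $T\to\infty$ the exponential terms at $t=T-a$ vanish, leaving $\frac{\lambda(\kappa)/\kappa}{(1/m+\lambda(\kappa)a/\kappa)^2}$. For the endpoint ratio, $I_T(0)=O(T^{-2})$ while $I_T(T)\to\lambda(\kappa)m^2/\kappa$. For the mass statement, since $I_T=(1/D)'$ we have
\[
\int_0^{(1-\eta)T}I_T=\frac{1}{D((1-\eta)T)}-\frac{1}{D(0)}=O(1/T),
\]
whereas $\int_0^T I_T\ge \int_{T-1}^T I_T$ stays bounded below. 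Hence $\int_0^{(1-\eta)T}f_T\to0$.
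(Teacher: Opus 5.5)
Your argument is correct and is essentially the paper's proof in dimensionless form. Your $h_S(s)=\varphi(s)-\psi(S)$ is exactly $-\frac{r_0}{\lambda(\kappa)}\Delta_T(t)$ with $s=r_0t$, $S=r_0T$. Your endpoint functions $h_S(0)$ and $h_S(S)=k(S)+4-B$ are the paper's $\Delta_T(0)$ and $\Delta_T(T)$, up to that same factor. Your regime-3 step (evaluate at $S_3'$ and use that $\psi$ is increasing) is the paper's evaluation of $\Delta_T(T_2)$, and the paper also uses $I_T=(1/D)'$ for the concentration statement.

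The one genuinely different piece is the ordering $T_2<T_3$. The paper substitutes the defining equation of $T_2$ to eliminate $B$ and reduces to $A(1-e^{-x})-\frac{2}{A}(e^{x}-1)>x$. It then uses monotonicity in $A$, the bound $A>2e^{x/2}$, and $2\sinh(x/2)>x$. You instead bound $\psi(S_3')$ directly, using $S_3'<2\log(A/2)$, $B-4<A/2-4$, and dropping $-Ae^{-S_3'}$. Your chain is valid, since the final inequality $y>2\log y$ holds for all $y>0$. It is more elementary, at the cost of using $B<A/2$ twice.

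One assertion is left unjustified. In regime 1 you use $S_1<s^\ast=\log(A/2)$ without proof, and it is not automatic. It follows because $k$ is strictly decreasing on $[0,\log A]$ and $k(\log(A/2))=1+4=5<6<B-4=k(S_1)$. This is where the hypothesis $B>10$ (really $B>9$) is actually needed, together with $B>8$ for the two roots of $h_S(S)$. The ordering $S_3'<S_\ast$, which you identify as the step where the constants enter, only uses $4<B<A/2$. The paper covers this point by showing $Ae^{-r_0T_1}>2$.
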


Theorem~\ref{thm:sequential-mode-transition} describes how the sampled-genealogy approximation changes as a density-dependent population is observed for longer time after initiation. The branching-time intensity reflects two sources of retained internal nodes. The first source is the initial expansion phase. During this phase, rapid expansion can establish major ancestral lineages that are represented in the final sample. This mechanism gives rise to an older, expansion-associated interior mode when the observation time is sufficiently long. The second source is the regulated phase near carrying capacity. As \(\bar N(t)\) approaches \(\kappa\), the per-cell net growth rate decreases to zero. However, when \(\lambda(\kappa)>0\), cell division and cell death continue to generate lineage turnover. Division events close to the sampling time can be retained because both daughter lineages need to persist only for a short time before observation. This produces a boundary mode at \(t=T\).

The theorem characterizes the competition between these two sources of branching events. For very short observation times, the sampled tree is shallow and the intensity increases toward the sampling endpoint. For intermediate observation times, an expansion-associated interior mode appears. For longer times, this older mode coexists with a recent boundary mode at \(t=T\), producing an early--recent bimodal profile. Finally, for large \(T\), the recent boundary mode dominates. Thus, prolonged density-dependent regulation leaves a branching-time signature that differs qualitatively from the exponential-growth reference case: the sampled genealogy increasingly reflects recent lineage-splitting events generated during the regulated phase, rather than only the early establishment of ancestral lineages.

\begin{figure}[htbp]
    \centering
    \includegraphics[
        width=\linewidth,
        height=0.68\textheight,
        keepaspectratio
    ]{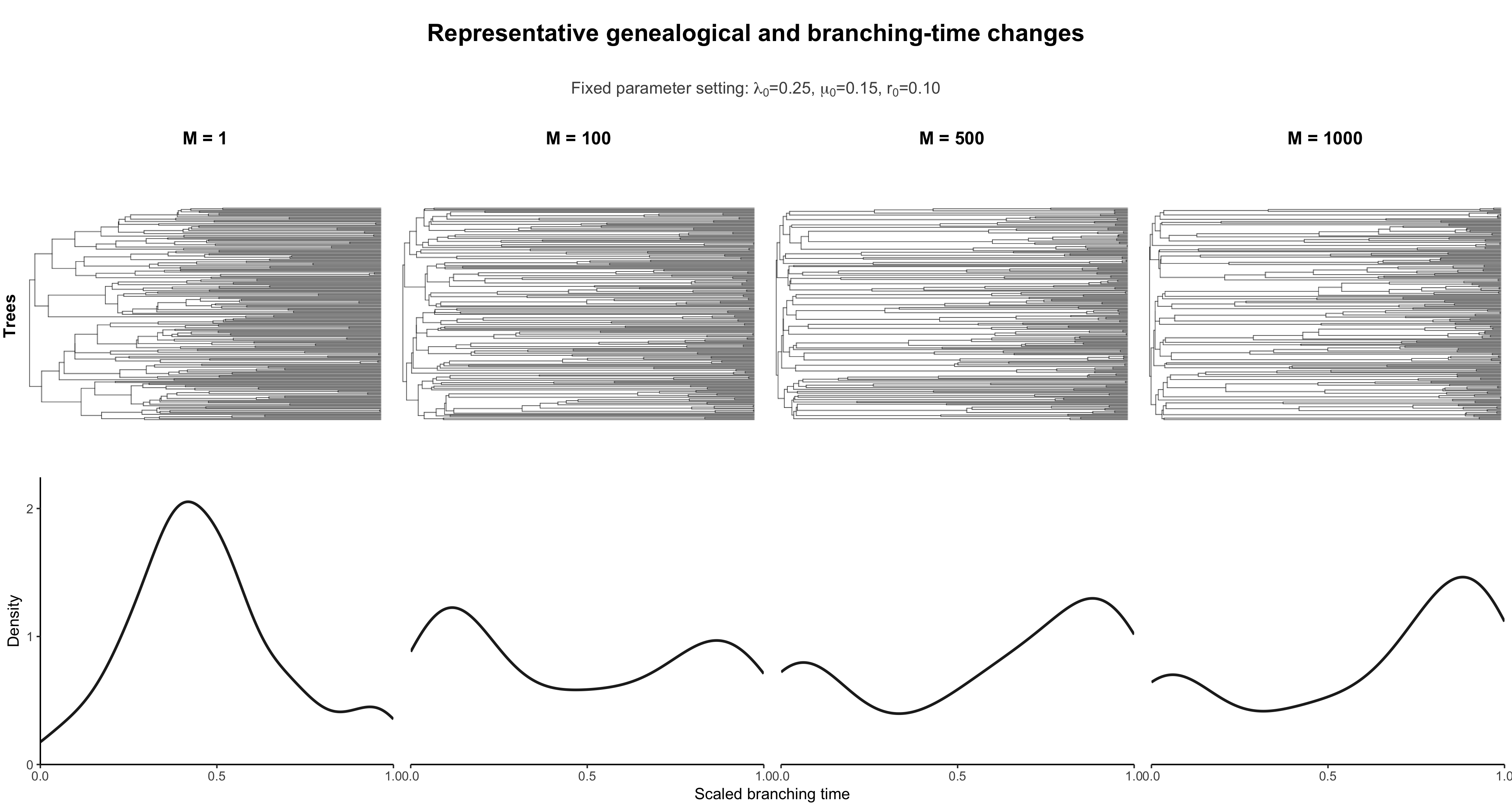}
    \caption{
    Representative temporal changes under density-dependent death with
    \(\lambda_0=0.25\), \(\mu_0=0.15\), and \(r_0=0.10\).
    Columns correspond to runtime multipliers \(M=1,100,500,\) and \(1000\),
    where \(T_M=M T_{\mathrm{base}}\) and \(T_{\mathrm{base}}\) is the first time at which \(N(t)\) reaches \(0.9\kappa\).
    The upper row shows sampled genealogies at each observation time.
    The lower row shows kernel density estimates of the \(T_M\)-scaled internal branching times.
    }
    \label{fig:death-representative}
\end{figure}

We next used finite stochastic simulations to examine the temporal changes described by Theorem~\ref{thm:sequential-mode-transition}. We focus first on the density-dependent death model. For a representative parameter setting, let \(T_{\mathrm{base}}\) denote the first time at which the living population in the simulated trajectory reached \(0.9\kappa\). For a runtime multiplier \(M\), we define the observation time $T_M = M T_{\mathrm{base}}$. Thus, \(M=1\) corresponds to the first observation near carrying capacity, whereas \(M=100\), \(M=500\), and \(M=1000\) correspond to increasingly long periods of density regulation. At each observation time \(T_M\), we sampled extant cells using the same sampling procedure as above and computed the \(T_M\)-scaled internal branching times, $\widetilde{\tau}_{S_M,i}=\frac{\tau_{S_M,i}}{T_M}$.

Figure~\ref{fig:death-representative} shows sampled genealogies and the corresponding branching-time density estimates for \(\lambda_0=0.25\), \(\mu_0=0.15\), and \(r_0=\lambda_0-\mu_0=0.10\). The branching-time profile changed systematically as the runtime multiplier \(M\) increased. At \(M=1\), the density was dominated by an interior mode. By \(M=100\), both an older, rootward mode and a recent, tipward mode were visible. As \(M\) increased to \(500\) and \(1000\), the recent mode became the dominant feature of the distribution. Thus, in this representative simulation, the branching-time distribution changed from an interior-dominated phase to an early--recent bimodal phase, and then to a recent-dominated phase. This is the same qualitative transition characterized by Theorem~\ref{thm:sequential-mode-transition}.

\subsubsection{Comparison between the two density-dependent mechanisms}

The two logistic models have the same deterministic population-size trajectory and the same time-dependent net growth rate,
\[
    \bar r(t)
    =
    r_0\left(1-\frac{\bar N(t)}{\kappa}\right),
\]
but they differ in the mechanism through which density dependence is imposed. Under density-dependent death, the birth rate remains constant while the death rate increases with population size. Under density-dependent proliferation suppression, the death rate remains constant while the birth rate decreases. Consequently, although the two models have identical deterministic growth trajectories, they have different levels of birth--death turnover after the population approaches carrying capacity.

\begin{figure}[htbp]
    \centering

    \begin{subfigure}[t]{0.98\textwidth}
        \centering
        \includegraphics[width=\linewidth]
        {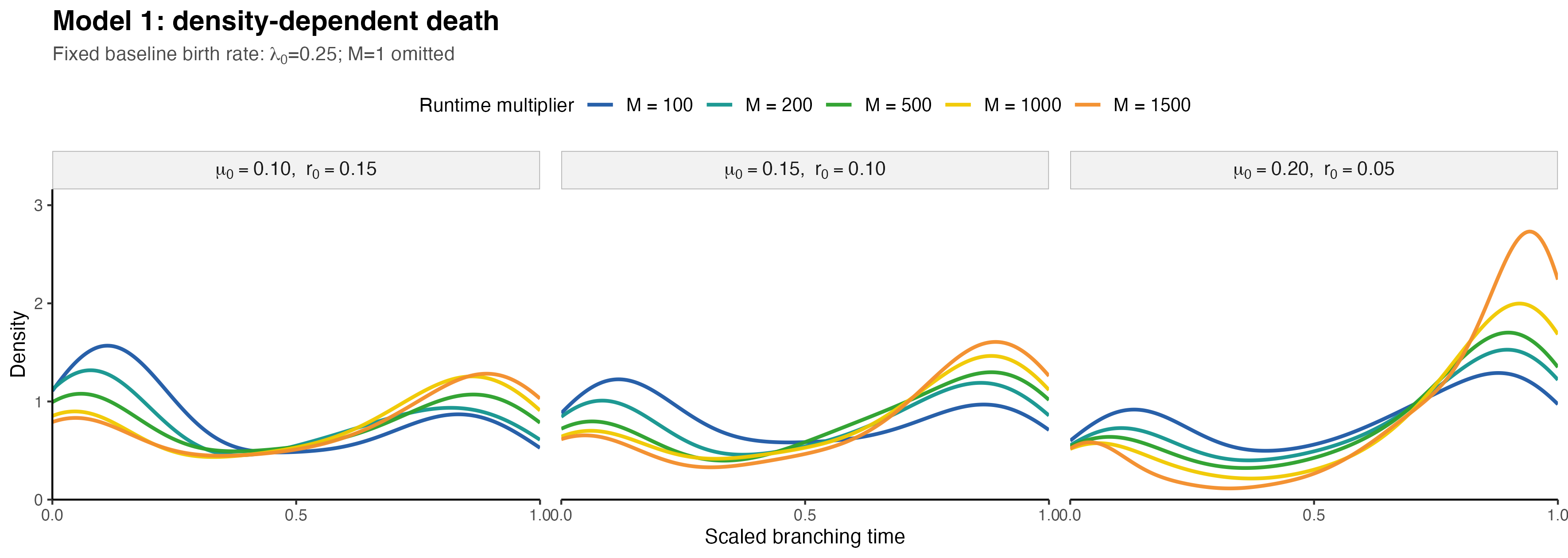}
        \caption{Model 1: density-dependent death.}
        \label{fig:fixed-birth025-model1}
    \end{subfigure}

    \vspace{0.35em}

    \begin{subfigure}[t]{0.98\textwidth}
        \centering
        \includegraphics[width=\linewidth]
        {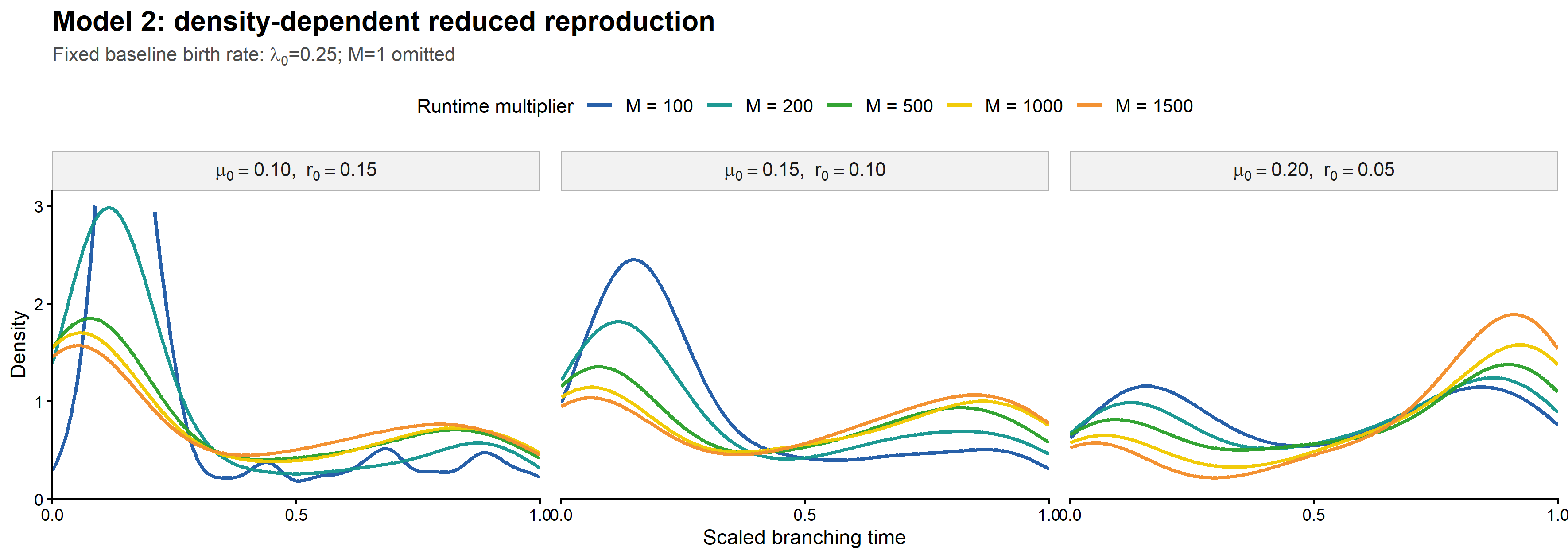}
        \caption{Model 2: density-dependent proliferation suppression.}
        \label{fig:fixed-birth025-model2}
    \end{subfigure}

    \caption{Comparison of branching-time density estimates under the two density-dependent logistic models at fixed baseline birth rate \(\lambda_0=0.25\).
    \textbf{(a)} Model 1, in which density dependence acts through the death rate.
    \textbf{(b)} Model 2, in which density dependence acts through the birth rate.
    In both panels, columns correspond to baseline death rates \(\mu_0=0.10,0.15,\) and \(0.20\), with net growth rates \(r_0=0.15,0.10,\) and \(0.05\), respectively.
    Curves show kernel density estimates of \(T_M\)-scaled internal branching times at observation times \(T_M=M T_{\mathrm{base}}\), for \(M\in\{100,200,500,1000,1500\}\).
    The baseline time \(T_{\mathrm{base}}\) is the first time at which the living population reaches \(0.9\kappa\).}
    \label{fig:fixed-birth025-model-comparison}
\end{figure}

Figure~\ref{fig:fixed-birth025-model-comparison} compares the two mechanisms for a fixed baseline birth rate \(\lambda_0=0.25\). Figures~\ref{fig:fixed-birth025-model-comparison}\subref{fig:fixed-birth025-model1} and~\subref{fig:fixed-birth025-model2} show the density-dependent death and proliferation-suppression models, respectively. In each panel, the three columns correspond to baseline death rates \(\mu_0=0.10\), \(0.15\), and \(0.20\), giving net growth rates \(r_0=0.15\), \(0.10\), and \(0.05\). The curves show kernel density estimates of the \(T_M\)-scaled internal branching times for runtime multipliers $M\in\{100,200,500,1000,1500\}$.

Both models exhibit the qualitative transition predicted by Theorem~\ref{thm:sequential-mode-transition}. As the runtime multiplier increased, branching-time mass shifted toward the recent endpoint, and this shift became more pronounced as the net growth rate decreased. However, the extent of this redistribution differed substantially between the two models. Under density-dependent death, the recent mode strengthened steadily with increasing runtime, while the rootward mode weakened. This effect was particularly pronounced when \(\mu_0=0.20\), where the recent mode became the dominant feature of the distribution. In contrast, under density-dependent proliferation suppression, the redistribution toward recent branching times was noticeably weaker. For \(\mu_0=0.10\) and \(0.15\), a substantial rootward mode persisted even at the longest observation times, and the recent mode became clearly dominant only when \(\mu_0\) approached \(\lambda_0\).

These differences are consistent with the distinct turnover rates of the two models. Near carrying capacity, density-dependent death maintains a high rate of cell division that is balanced by an equally high death rate, producing continuous lineage turnover. In contrast, proliferation suppression reduces both population growth and lineage production by lowering the birth rate. Consequently, although both models share the same deterministic population-size trajectory, the higher turnover under density-dependent death generates substantially more recent retained branching events and produces a stronger shift of the branching-time distribution toward the sampling time.

To summarize branching-time redistribution across the parameter grid, we define two summary statistics. For a sampled tree at observation time \(T_M\), define
\[
E_{0.25}
=
\frac{1}{m-1}
\sum_{i\in\mathcal I_{S_M}}
\mathbf{1}\{\widetilde{\tau}_{S_M,i}\le 0.25\},
\qquad
L_{0.75}
=
\frac{1}{m-1}
\sum_{i\in\mathcal I_{S_M}}
\mathbf{1}\{\widetilde{\tau}_{S_M,i}\ge 0.75\},
\]
and let
\[
D=L_{0.75}-E_{0.25}.
\]
Positive values of \(D\) indicate that branching events are concentrated near the sampled tips, whereas negative values indicate a stronger concentration near the root. The notation \(D\) is used only for this summary statistic; the death rate is denoted by \(\mu_0\).

Figure~\ref{fig:late-minus-early-heatmap-comparison} summarizes \(D\) for both logistic models over a common parameter grid with \(\lambda_0\in\{0.25,0.30,0.35\}\), multiple values of \(\mu_0<\lambda_0\), and runtime multipliers \(M\in\{100,200,500,1000,1500\}\). Under density-dependent death (Figure~\ref{fig:death-contrast}), \(D\) increased with runtime for most parameter settings. Many parameter combinations changed from \(D<0\) to \(D>0\), indicating a progressive redistribution of branching events toward the sampled tips. This transition was most pronounced for larger death rates and smaller net growth rates. Under density-dependent proliferation suppression (Figure~\ref{fig:birth-contrast}), the same qualitative trend was present but substantially weaker. For low and intermediate death rates, \(D\) remained negative even at the longest observation times, indicating persistent rootward concentration of branching events. Positive values of \(D\) occurred mainly when \(\mu_0\) was close to \(\lambda_0\).

Together, Figures~\ref{fig:fixed-birth025-model-comparison} and~\ref{fig:late-minus-early-heatmap-comparison} show that the two logistic models produce distinct branching-time signatures despite having identical deterministic population trajectories. Density-dependent death generates higher birth--death turnover near carrying capacity, producing a stronger redistribution of retained branching events toward the sampling time. In contrast, density-dependent proliferation suppression reduces lineage production during the regulated phase, thereby preserving a larger proportion of branching events generated during the initial expansion.

\begin{figure}[htbp]
    \centering

    \begin{subfigure}[t]{0.98\textwidth}
        \centering
        \includegraphics[
            width=\linewidth,
            height=0.40\textheight,
            keepaspectratio
        ]{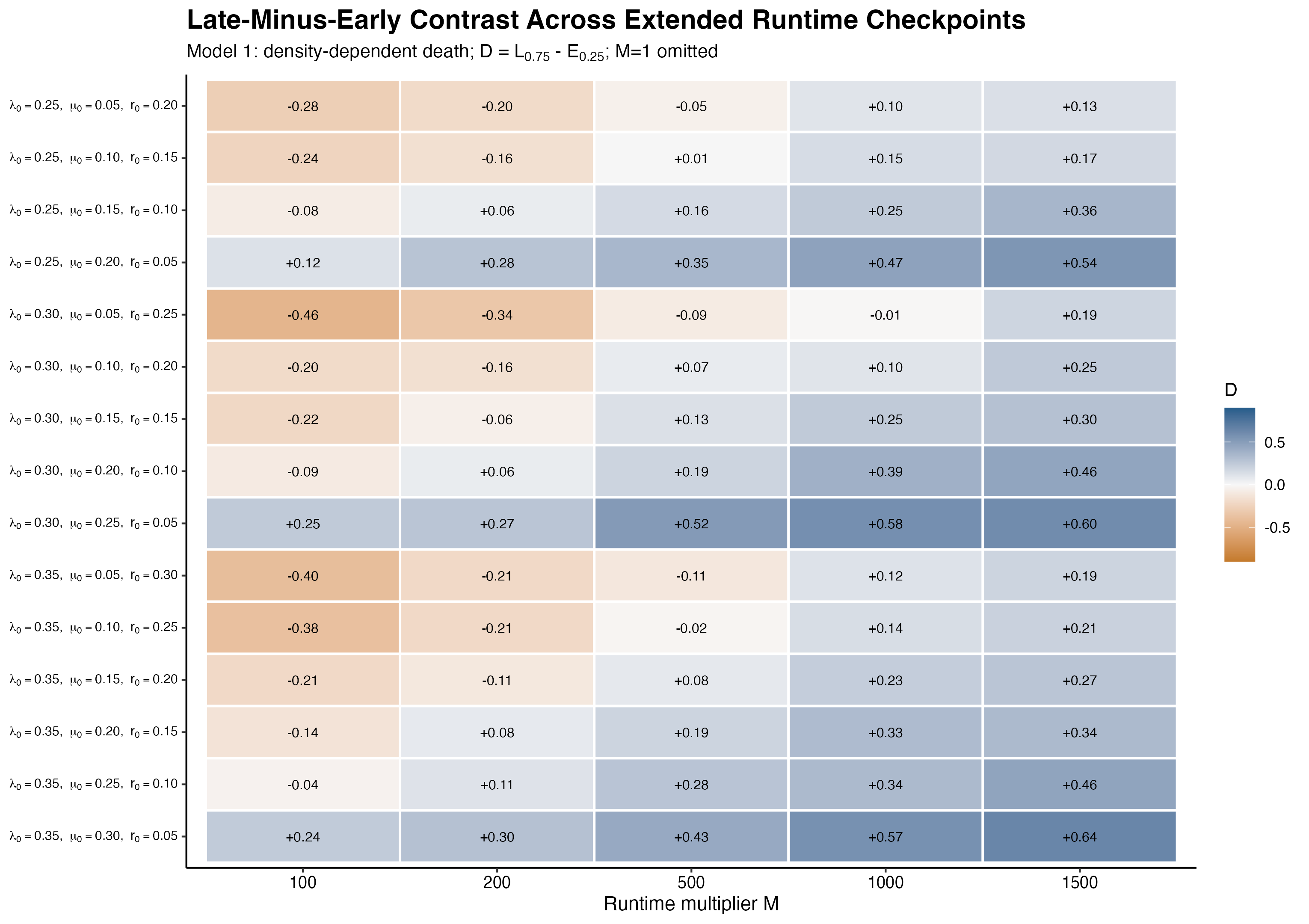}
        \caption{Model 1: density-dependent death.}
        \label{fig:death-contrast}
    \end{subfigure}

    \vspace{0.35em}

    \begin{subfigure}[t]{0.98\textwidth}
        \centering
        \includegraphics[
            width=\linewidth,
            height=0.40\textheight,
            keepaspectratio
        ]{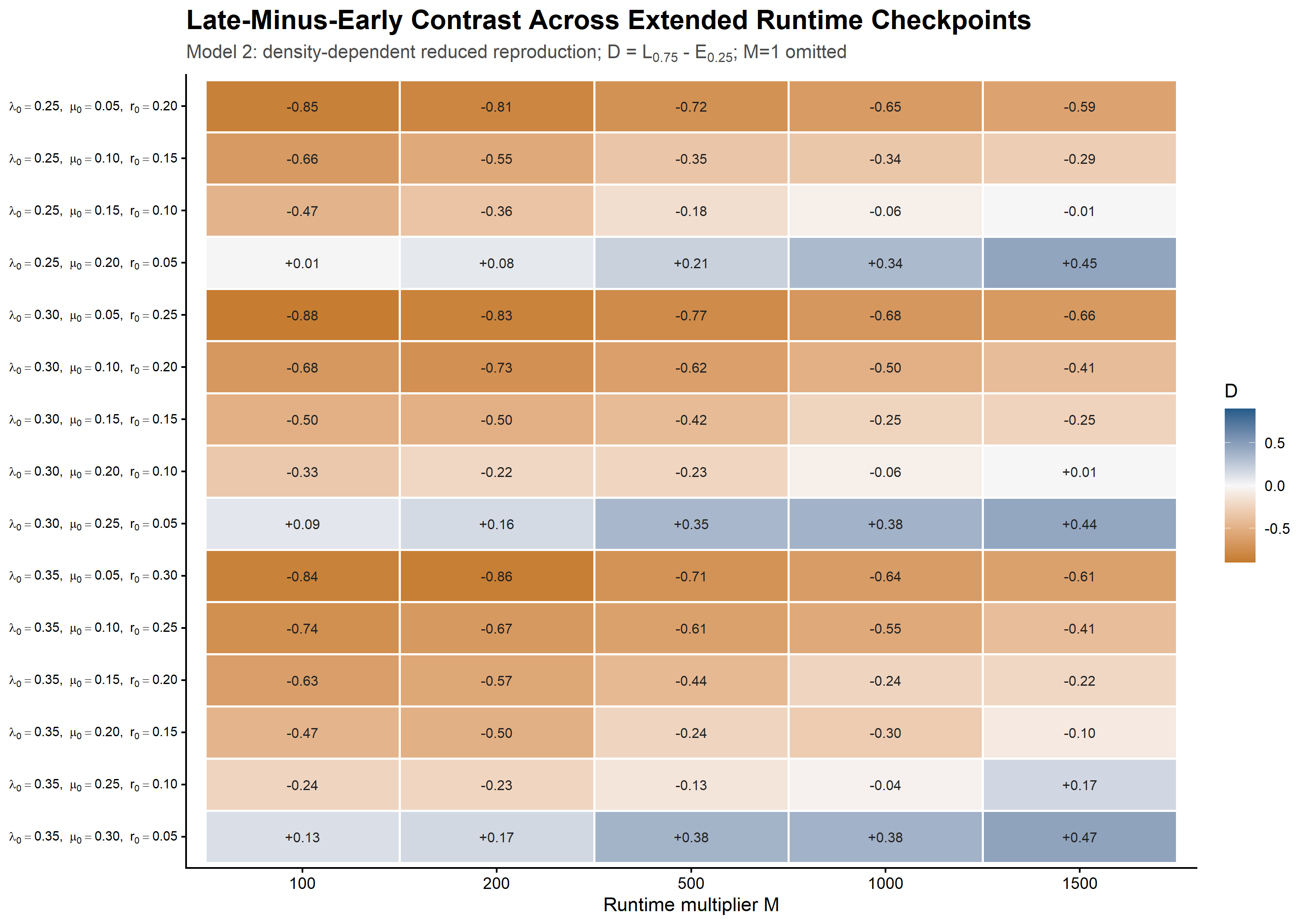}
        \caption{Model 2: density-dependent proliferation suppression.}
        \label{fig:birth-contrast}
    \end{subfigure}

\caption{
Comparison of the late--early branching-time contrast under the two density-dependent logistic models.
\textbf{(a)} Density-dependent death.
\textbf{(b)} Density-dependent proliferation suppression.
Rows correspond to baseline birth--death parameter settings, and columns correspond to runtime multipliers \(M=100,200,500,1000,\) and \(1500\).
Each cell reports $D=L_{0.75}-E_{0.25}$, where \(E_{0.25}\) and \(L_{0.75}\) are the fractions of \(T_M\)-scaled branching times in the intervals \([0,0.25]\) and \([0.75,1]\), respectively.
Positive values (\(D>0\), blue) indicate greater concentration of branching events near the sampled tips, whereas negative values (\(D<0\), orange) indicate greater concentration near the root.
}
    \label{fig:late-minus-early-heatmap-comparison}
\end{figure}

\section{Concluding Comments}\label{Sec:conclusion}

In this work, we examined how the distribution of internal branching times changes across different birth--death growth regimes. Specifically, we compared constant-rate exponential growth with two density-dependent logistic models, in which regulation acts through either increased death or reduced proliferation.

Under exponential growth, the large-\(T\), large-\(m\) coalescent point process approximation yields a unimodal marginal branching-time distribution. Its mode lies on the rootward side of the sampled genealogy when the observation time is large relative to the characteristic sampling scale. Density-dependent logistic growth produced a different sequence of branching-time patterns. Under the deterministic sampled-genealogy approximation, the branching-time intensity changes from an interior expansion-associated mode, to an early--recent bimodal phase, and finally to a recent-dominated phase. 

The comparison between the two logistic models shows that the net-growth rate alone does not determine the sampled genealogy. Although the models have the same logistic population trajectory, they differ in their birth--death turnover near carrying capacity. Density-dependent death maintains a higher division rate during the regulated phase and produces a larger redistribution of branching times toward the recent endpoint. Density-dependent proliferation suppression generates fewer late division events and retains a greater contribution from branching events established during the initial expansion. These results show that internal branching-time distributions can contain qualitative signatures associated with both population growth and the mechanism of density regulation. They should not, however, be interpreted as providing a unique classification of growth regime. Different biological processes may produce similar genealogical patterns, and the present models are simplified reference cases. Although the present work is primarily descriptive rather than inferential, the results suggest that branching-time distributions may serve as useful summary statistics for future statistical inference. Rather than reconstructing an entire population-size trajectory, one could compare observed branching-time distributions with those predicted under competing evolutionary models, or estimate biologically meaningful quantities such as the duration of density regulation or the level of birth--death turnover. Developing statistically rigorous procedures for model selection and parameter estimation based on branching-time distributions remains an important direction for future work.

Several limitations remain. The theoretical results rely on asymptotic and deterministic approximations, and the simulations assume uniform sampling from the extant population. Real tumors may exhibit spatial structure, clonal selection, heterogeneous birth and death rates, treatment effects, and biased sampling. Phylogenetic reconstruction and time calibration also introduce uncertainty. Future work could develop stochastic results for sampled genealogies under density-dependent birth--death processes and examine the robustness of the branching-time patterns to spatial structure, selection, treatment, and nonuniform sampling. Statistical inference procedures based on branching-time distributions could also be developed to compare competing growth models or estimate parameters governing density regulation and birth--death turnover while accounting for sampling and phylogenetic uncertainty.

% =========================
% Appendix: Unimodality under exponential growth (CPP model)
% =========================

\newpage

\appendix

\section{Proof of Theorem \ref{Thm: unimodality_exponential}}

\begin{proof}
Under the large-\(T\), large-\(m\) CPP approximation in \eqref{eq:exp-cpp-forward}, the forward-time branching location is $\tau_{s,i}=\frac{1}{r_0}\left[\log(1/W_s)+\log m+U_{s,i}\right]$, where \(W_s\sim\operatorname{Exp}(1)\), \(U_{s,i}\sim\operatorname{Logistic}(0,1)\), and \(W_s\) is independent of \(U_{s,i}\). We first prove log-concavity for the marginal density of  \(\tau_{s,i}\).

Define $Y_s=\frac{1}{r_0}\left[\log(1/W_s)+\log m\right]$ and $V_{s,i}=\frac{1}{r_0}U_{s,i}$. Then $\tau_{s,i}=Y_s+V_{s,i}$, and \(Y_s\) is independent of \(V_{s,i}\). We first show that \(Y_s\) has a log-concave density. Since $Y_s=\frac{1}{r_0}(\log m-\log W_s)$, we have $W_s = m e^{-r_0Y_s}$. Using the density \(f_W(w)=e^{-w}\) for \(w>0\), and the change of variables \(w=m e^{-r_0 y}\), we obtain $f_Y(y)=r_0 m e^{-r_0 y}\exp\left\{-m e^{-r_0 y}\right\}$ for $y\in\mathbb{R}$. Thus
\[
    \log f_Y(y)
    =
    \log r_0+\log m-r_0 y-m e^{-r_0 y}.
\]
Differentiating twice gives $\frac{d^2}{dy^2}\log f_Y(y)=-m r_0^2 e^{-r_0 y}<0$. Therefore \(f_Y\) is strictly log-concave.

Next, we show that the scaled logistic term \(V_{s,i}\) also has a log-concave density. Since \(U_{s,i}\sim\operatorname{Logistic}(0,1)\) and \(V_{s,i}=U_{s,i}/r_0\), the density of \(V_{s,i}\) is $f_V(v)=r_0\,\frac{e^{r_0v}}{(1+e^{r_0v})^2}$ for $v\in\mathbb{R}$. Its log-density is
\[
    \log f_V(v)
    =
    \log r_0 + r_0v - 2\log(1+e^{r_0v}).
\]
A second differentiation gives $\frac{d^2}{dv^2}\log f_V(v)=-\frac{2r_0^2e^{r_0v}}{(1+e^{r_0v})^2}<0$. Hence \(f_V\) is strictly log-concave.

Since \(Y_s\) and \(V_{s,i}\) are independent, the marginal density of \(\tau_{s,i}=Y_s+V_{s,i}\) is the convolution
\[
    f_\tau(x)
    =
    \int_{-\infty}^{\infty}
    f_Y(x-v)f_V(v)\,dv .
\]
Both \(f_Y\) and \(f_V\) are log-concave densities on \(\mathbb{R}\), and log-concavity is preserved under convolution \citep{SaumardWellner2014}. Therefore \(f_\tau\) is log-concave. Every log-concave density on \(\mathbb{R}\) is unimodal, so the marginal density of \(\tau_{s,i}\) is unimodal.
\end{proof}

\section{Proof of Lemma \ref{lem:explicit-intensity}}

\begin{proof}
Fix the observation time \(T\), and write \(q(t)=q(t;T)\). Let $v(t)=\frac{1}{q(t)}$. From the backward equation
\[
    \frac{dq}{dt}
    =
    -\bar r(t)q(t)+\bar\lambda(t)q(t)^2,
\]
we obtain $\frac{dv}{dt}=\bar r(t)v(t)-\bar\lambda(t)$. The deterministic population trajectory satisfies $\frac{d\bar N(t)}{dt}=\bar r(t)\bar N(t)$. Therefore, $\frac{d}{dt}\left(\frac{v(t)}{\bar N(t)}\right)=-\frac{\bar\lambda(t)}{\bar N(t)}$. Using the terminal condition $q(T;T)=\frac{m}{\bar N(T)}$, we have $\frac{v(T)}{\bar N(T)}=\frac{1}{m}$. Integrating from \(t\) to \(T\) gives $\frac{v(t)}{\bar N(t)}=\frac{1}{m}+\int_t^T\frac{\bar\lambda(u)}{\bar N(u)}\,du$. Equivalently,
\[
    q(t;T)
    =
    \frac{1}
    {
        \bar N(t)
        \left[
            \frac{1}{m}
            +
            \int_t^T
            \frac{\bar\lambda(u)}{\bar N(u)}\,du
        \right]
    }.
\]
Substituting this expression into $I_T(t)=\bar\lambda(t)\bar N(t)q(t;T)^2$ yields
\[
    I_T(t)
    =
    \frac{\bar\lambda(t)/\bar N(t)}
    {
        \left[
            \frac{1}{m}
            +
            \int_t^T
            \frac{\bar\lambda(u)}{\bar N(u)}\,du
        \right]^2
    }.
\]

It remains to evaluate this expression along the deterministic logistic
trajectory. For both density-regulated logistic models, $\bar N(t)=\frac{\kappa}{1+c e^{-r_0t}}$, where $c=\frac{\kappa-N_0}{N_0}$. Moreover, the effective birth rate can be written in the common form $\bar\lambda(t)=\lambda_0-\frac{\bar N(t)}{\kappa}\bigl(\lambda_0-\lambda(\kappa)\bigr)$, where \(\lambda(\kappa)=\lambda_0\) for density-dependent death and \(\lambda(\kappa)=\mu_0\) for density-dependent reduced reproduction. Hence $\frac{\bar\lambda(t)}{\bar N(t)}=\frac{\lambda_0}{\bar N(t)}-\frac{\lambda_0-\lambda(\kappa)}{\kappa}$. Since $\frac{1}{\bar N(t)}=\frac{1+c e^{-r_0t}}{\kappa}$, we obtain
\[
    \frac{\bar\lambda(t)}{\bar N(t)}
    =
    \frac{\lambda(\kappa)}{\kappa}
    +
    \frac{\lambda_0 c}{\kappa}e^{-r_0t}.
\]
Therefore,
\[
    \int_t^T
    \frac{\bar\lambda(u)}{\bar N(u)}\,du
    =
    \frac{\lambda(\kappa)}{\kappa}(T-t)
    +
    \frac{\lambda_0 c}{\kappa r_0}
    \left(e^{-r_0t}-e^{-r_0T}\right).
\]
Substituting this identity into the preceding expression for \(I_T(t)\) gives
\[
    I_T(t)
    =
    \frac{
        \displaystyle
        \frac{\lambda(\kappa)}{\kappa}
        +
        \frac{\lambda_0 c}{\kappa}e^{-r_0 t}
    }{
    \left[
        \displaystyle
        \frac{1}{m}
        +
        \frac{\lambda(\kappa)}{\kappa}(T-t)
        +
        \frac{\lambda_0 c}{\kappa r_0}
        \left(e^{-r_0 t}-e^{-r_0 T}\right)
    \right]^2
    },
    \qquad 0\le t\le T.
\]
This proves the lemma.
\end{proof}

\section{Proof of Theorem \ref{thm:sequential-mode-transition}}

\begin{proof}
Recall that $A=\frac{\lambda_0 c}{\lambda(\kappa)}$ and $B=\frac{r_0\kappa}{m\lambda(\kappa)}$. Since \(f_T\) differs from \(I_T\) only by a positive normalizing constant, \(I_T\) and \(f_T\) have the same modal structure. It is therefore enough to analyze the modes of \(I_T\).

From Lemma~\ref{lem:explicit-intensity},
\[
    I_T(t)
    =
    \frac{
        \displaystyle
        \frac{\lambda(\kappa)}{\kappa}
        +
        \frac{\lambda_0 c}{\kappa}e^{-r_0 t}
    }{
    \left[
        \displaystyle
        \frac{1}{m}
        +
        \frac{\lambda(\kappa)}{\kappa}(T-t)
        +
        \frac{\lambda_0 c}{\kappa r_0}
        \left(e^{-r_0 t}-e^{-r_0 T}\right)
    \right]^2
    },
    \qquad 0\le t\le T .
\]
Multiplying the denominator inside the square brackets by \(\kappa\) does not affect derivatives of the logarithm. Thus,
\[
    \frac{d}{dt}\log I_T(t)
    =
    -\frac{r_0\lambda_0 c e^{-r_0t}}
    {\lambda(\kappa)+\lambda_0 c e^{-r_0t}}
    +
    \frac{
        2\left(\lambda(\kappa)+\lambda_0 c e^{-r_0t}\right)
    }{
        \frac{\kappa}{m}
        +
        \lambda(\kappa)(T-t)
        +
        \frac{\lambda_0 c}{r_0}
        \left(e^{-r_0t}-e^{-r_0T}\right)
    } .
\]
An interior stationary point therefore satisfies
\[
    \frac{\kappa}{m}
    +
    \lambda(\kappa)(T-t)
    +
    \frac{\lambda_0 c}{r_0}
    \left(e^{-r_0t}-e^{-r_0T}\right)
    =
    \frac{
        2\left(\lambda(\kappa)+\lambda_0 c e^{-r_0t}\right)^2
    }{
        r_0\lambda_0 c e^{-r_0t}
    } .
\]
Define
\[
\begin{aligned}
    \Delta_T(t)=
    \frac{\kappa}{m}
    +
    \lambda(\kappa)(T-t)
    +
    \frac{\lambda_0 c}{r_0}
    \left(e^{-r_0t}-e^{-r_0T}\right)
    -
    \frac{
        2\left(\lambda(\kappa)+\lambda_0 c e^{-r_0t}\right)^2
    }{
        r_0\lambda_0 c e^{-r_0t}
    } .
\end{aligned}
\]
Then $\operatorname{sign}\{I_T'(t)\}=-\operatorname{sign}\{\Delta_T(t)\}$. A direct differentiation gives $\Delta_T''(t)=-r_0\lambda_0 c e^{-r_0t}-\frac{2r_0\lambda(\kappa)^2}{\lambda_0 c}e^{r_0t}<0$. Hence \(\Delta_T\) is strictly concave on \([0,T]\). In particular, \(\Delta_T\) has at most two zeros, so \(I_T\) has at most two interior stationary points.

We next determine how the endpoint signs change with \(T\). At the recent endpoint \(t=T\),
\[
    \frac{r_0}{\lambda(\kappa)}\Delta_T(T)
    =
    B
    -
    2\left(
        Ae^{-r_0T}
        +
        2
        +
        A^{-1}e^{r_0T}
    \right).
\]
The right-hand side is negative at \(T=0\), tends to \(-\infty\) as \(T\to\infty\), and has a unique maximum at $T=\frac{\log A}{r_0}$. The maximum value is \(B-8>0\), since \(B>10\). Therefore there exist exactly two positive times \(T_1<T_2\) such that $B=2\left(Ae^{-r_0T}+2+A^{-1}e^{r_0T}\right)$. Consequently,
\[
    I_T'(T^-)
    \begin{cases}
        >0, & 0<T<T_1,\\
        <0, & T_1<T<T_2,\\
        >0, & T>T_2 .
    \end{cases}
\]

At the early endpoint \(t=0\),
\[
    \frac{r_0}{\lambda(\kappa)}\Delta_T(0)
    =
    B+r_0T+A\left(1-e^{-r_0T}\right)-2A-4-\frac{2}{A}.
\]
This expression is strictly increasing in \(T\), because its derivative is $r_0+r_0Ae^{-r_0T}>0$. At \(T=0\), it is negative under the assumptions \(B<A/2\) and \(A>20\), while it tends to \(+\infty\) as \(T\to\infty\). Hence there exists a unique positive time \(T_3\) such that
\[
    B+r_0T_3+A\left(1-e^{-r_0T_3}\right)
    =
    2A+4+\frac{2}{A}.
\]
Thus,
\[
    I_T'(0^+)
    \begin{cases}
        >0, & T<T_3,\\
        <0, & T>T_3 .
    \end{cases}
\]

It remains to show that \(T_2<T_3\). Since \(T_2\) is the later solution of the recent-endpoint equation, we have $Ae^{-r_0T_2}<1$ and $B=2\left(Ae^{-r_0T_2}+2+A^{-1}e^{r_0T_2}\right)$. The condition \(B<A/2\) implies $Ae^{-r_0T_2}>\frac{4}{A}$. Indeed, the function \(x\mapsto 2(x+2+x^{-1})\) is strictly decreasing on \((0,1)\), and
\[
    2\left(\frac{4}{A}+2+\frac{A}{4}\right)
    =
    \frac{A}{2}+4+\frac{8}{A}
    >
    \frac{A}{2}
    >
    B .
\]
Therefore, $A>2e^{r_0T_2/2}$. Evaluating the early-endpoint expression at \(T=T_2\), and using the equation defining \(T_2\), gives
\[
    \frac{r_0}{\lambda(\kappa)}\Delta_{T_2}(0)
    =
    Ae^{-r_0T_2}
    +
    2A^{-1}e^{r_0T_2}
    +
    r_0T_2
    -
    A
    -
    \frac{2}{A}.
\]
Thus \(\Delta_{T_2}(0)<0\) is equivalent to $A\left(1-e^{-r_0T_2}\right)-\frac{2}{A}\left(e^{r_0T_2}-1\right)>r_0T_2$. For fixed \(T_2\), the left-hand side is increasing as a function of \(A\). Since \(A>2e^{r_0T_2/2}\), it is larger than its value at \(A=2e^{r_0T_2/2}\), namely $2\sinh\left(\frac{r_0T_2}{2}\right)$. Because $2\sinh\left(\frac{x}{2}\right)>x$ for every $x>0$, we obtain \(\Delta_{T_2}(0)<0\). Since \(\Delta_T(0)\) is strictly increasing
in \(T\), the unique zero \(T_3\) of \(\Delta_T(0)\) must satisfy $T_2<T_3$.

We now classify the modal structure in the four regimes. First, suppose \(0<T<T_1\). The first transition time satisfies $Ae^{-r_0T_1}>2$. Indeed, at \(T=T_1\) the quantity \(Ae^{-r_0T_1}\) is greater than one, and \(B>10>2(2+2+1/2)\). Hence $T_1<\frac{\log(A/2)}{r_0}$. For \(0\le t\le T<T_1\),
\[
    \Delta_T'(t)
    =
    \lambda(\kappa)
    \left(
        Ae^{-r_0t}
        -
        1
        -
        2A^{-1}e^{r_0t}
    \right)
    >0 .
\]
Thus \(\Delta_T\) is increasing on \([0,T]\). Since \(\Delta_T(T)<0\) for \(T<T_1\), it follows that \(\Delta_T(t)<0\) for all \(0\le t\le T\). Hence \(I_T'(t)>0\) on \((0,T)\), so \(I_T\) is strictly increasing and has a single boundary mode at \(t=T\).

Next, suppose \(T_1<T<T_2\). Since \(T<T_2<T_3\), we have $\Delta_T(0)<0$ and $\Delta_T(T)>0$. Because \(\Delta_T\) is strictly concave, these endpoint signs imply that \(\Delta_T\) has exactly one zero in \((0,T)\). Therefore \(I_T'\) changes from positive to negative at this zero, and \(I_T\) has exactly one interior local maximum.

Now suppose \(T_2<T<T_3\). Then $\Delta_T(0)<0$ and $\Delta_T(T)<0$. However, evaluating at the fixed interior point \(t=T_2\) gives
\[
\begin{aligned}
    \Delta_T(T_2)
    =
    \Delta_{T_2}(T_2)
    +
    \lambda(\kappa)(T-T_2)
    +
    \frac{\lambda_0c}{r_0}
    \left(e^{-r_0T_2}-e^{-r_0T}\right) >0,
\end{aligned}
\]
because \(\Delta_{T_2}(T_2)=0\). Hence \(\Delta_T\) is negative at both endpoints and positive at an interior point. Strict concavity then implies that \(\Delta_T\) has exactly two zeros. Therefore, the sign of \(I_T'\) changes from $+$ to $-$ and then to $+$. Thus \(I_T\) has one interior local maximum, followed by one interior local minimum, and the recent endpoint \(t=T\) is a boundary maximum. The branching-time density is therefore bimodal.

Finally, suppose \(T>T_3\). Then $\Delta_T(0)>0$ and $\Delta_T(T)<0$. Strict concavity implies that \(\Delta_T\) has exactly one zero in \((0,T)\). Consequently, the sign of \(I_T'\) changes from $-$ to $+$. Thus \(I_T\) has one interior local minimum and no interior local maximum. The recent endpoint \(t=T\) is a boundary maximum. The early endpoint \(t=0\) is also a boundary maximum, but it becomes asymptotically negligible. Indeed, direct substitution gives $I_T(T)=\frac{\lambda(\kappa)/\kappa+\left(\lambda_0 c/\kappa\right) e^{-r_0T}}{(1/m)^2}\rightarrow\frac{\lambda(\kappa)m^2}{\kappa}$, whereas $I_T(0)=O(T^{-2})$. Therefore, $\frac{I_T(0)}{I_T(T)}\rightarrow 0$. For each fixed \(a\ge 0\),
\[
    I_T(T-a)
    \rightarrow
    \frac{\lambda(\kappa)/\kappa}
    {
        \left(
            \frac{1}{m}
            +
            \frac{\lambda(\kappa)}{\kappa}a
        \right)^2
    }
    =
    \frac{\lambda(\kappa)m^2/\kappa}
    {
        \left(1+\lambda(\kappa)ma/\kappa\right)^2
    } .
\]

It remains to prove the concentration statement. From the derivation of Lemma~\ref{lem:explicit-intensity}, $I_T(t)=\frac{\bar\lambda(t)/\bar N(t)}{\left[\frac{1}{m}+\int_t^T\frac{\bar\lambda(u)}{\bar N(u)}\,du\right]^2}$. Therefore, $I_T(t)=\frac{d}{dt}\left[\frac{1}{\frac{1}{m}+\int_t^T\frac{\bar\lambda(u)}{\bar N(u)}\,du}\right]$. Hence, $\int_a^b I_T(t)\,dt=\left[\frac{1}{\frac{1}{m}+\int_t^T\frac{\bar\lambda(u)}{\bar N(u)}\,du}\right]_{t=a}^{t=b}$. Taking \(a=0\) and \(b=T\), we obtain $\int_0^T I_T(t)\,dt\rightarrow m$, because the denominator at \(t=T\) is \(1/m\), while the denominator at \(t=0\) diverges as \(T\to\infty\). For any fixed \(0<\eta<1\), $\int_0^{(1-\eta)T} I_T(t)\,dt \rightarrow 0$, since the denominator at \(t=(1-\eta)T\) contains the term $\frac{\lambda(\kappa)}{\kappa}\eta T$, which diverges as \(T\to\infty\). Dividing by
\(\int_0^T I_T(t)\,dt\) yields
\[
    \int_0^{(1-\eta)T} f_T(t)\,dt
    \rightarrow 0 .
\]
Thus, after rescaling by the total observation time, the branching-time
distribution concentrates near the recent endpoint. This completes the proof.
\end{proof}

\begin{singlespace}

\bibliography{bibliography}

\end{singlespace}

\end{document}